\pgfplotsset{compat=1.8}
\newrobustcmd\B{\DeclareFontSeriesDefault[rm]{bf}{b}\bfseries}
\newcommand{\fodot}{FO($\cdot$)\xspace}
\title{Efficiently grounding FOL using bit vectors}
\author{Lucas Van Laer, Simon Vandevelde, Joost Vennekens}
\institute
{KU Leuven, De Nayer Campus, Dept. Of Computer Science,\\
J.-P. De Nayerlaan 5, 2860 Sint-Katelijne-Waver, Belgium
\and
Leuven.AI -- KU Leuven institute for AI, B-3000 Leuven, Belgium
\and
Flanders Make -- DTAI-FET\\
\email{\{lucas.vanlaer, s.vandevelde, joost.vennekens\}@kuleuven.be}
}
\authorrunning{L. Van Laer et al.}
\begin{document}
\maketitle

\begin{abstract}
Several paradigms for declarative problem solving start from a specification in a high-level language, which is then transformed to a low-level language, such as SAT or SMT.
Often, this transformation includes a ``grounding'' step to remove first-order quantification.
To reduce the time and size of the grounding, it can be useful to simplify formulas along the way, e.g., by already taking into account the interpretation of symbols that are already known. 
%Given an interpretation for one or more symbols, a naive way of simplifying quantifications is by individually simplifying the subformula for each tuple of domain elements, which can be quite slow for formulas with many nested quantifications.
% For formulas with many nested quantifications this method can be quite slow.
% In this paper, we propose representing formulas only containing interpreted symbols using bitvectors.
% On modern hardware logical operations on bitvectors can be executed extremely fast.
% We also introduce a method to increase the speed of grounding for formulas that contain a mix of interpreted and non-interpreted symbols.
% We present the theoretical approach and experiments that show that for certain problems bitvectors are fast but for others a different data structure is needed.
In this paper, we investigate the use of bit vectors to efficiently simplify formulas, thereby taking advantage of the fact that, on modern hardware, logical operations on bit vectors can be executed extremely fast.
%Our approach is two-fold: first, we introduce a method for formulas in which every symbol is interpreted.
%Second, we also elaborate how our approach can be used when a formula contains a mix of (non-)interpreted symbols.
We conduct an experimental analysis, which shows that bit vectors are indeed fast for certain problems, but also have limitations.
\end{abstract}

\section{Introduction}

Many systems for declarative problem solving, including, for instance, most Answer Set Programming (ASP)~\cite{ASP} systems, use a ``ground and solve'' approach: the user represents the problem domain in a ``high-level'' specification language, which is then transformed into an equivalent ``low-level'' specification that can be used to efficiently search for solutions.
Typically, the high-level language makes use of first-order variables to represent the domain knowledge in a way which is independent of the size of the problem that needs to be solved: e.g., the rules of an $n$-queens problem can be represented in a way which does not depend on the value of $n$.
For instance, in classical logic syntax, one of the relevant constraints can be written as:

\begin{equation}\forall x,y: x\neq y \Rightarrow \mathit{queen}(x) \neq \mathit{queen}(y).\label{eq:fo}\end{equation}

% \simon{Waarom niet meteen starten met een voorbeeld met geinterpreteerd symbol, zoals map colouring?}
In order to actually solve a concrete instance of the problem, say for $n=8$, the variables $x$ and $y$ are then grounded out, i.e., replaced by all possible combinations in $[1,8] \times [1,8]$. This step turns the domain-size independent formula into one which does depend on the chosen size of the domain:

\[ (1\neq1 \Rightarrow \mathit{queen}(1) \neq \mathit{queen}(1)) \land (1\neq 2 \Rightarrow \mathit{queen}(1) \neq \mathit{queen}(2)) \land  \ldots \] 

However, this formula can be further simplified, e.g., into:

\begin{equation}\mathit{queen}(1) \neq \mathit{queen}(2)) \land  \mathit{queen}(1) \neq \mathit{queen}(3)) \land \ldots\label{eq:gr}\end{equation}

% Obviously, in practice, it is typically preferably to generate the simplified formula directly, rather than first explicitly generating a non-simplified version.

This simplified formula can then be given to a solver. 
For instance, the ASP system clingo~\cite{clingo} sends the output of its grounder \textit{gringo} to its solver \textit{clasp}.
Alternatively, a ground formula that has been appropriately simplified can also be sent to a SAT or SMT solver.
For instance, the IDP-Z3 system~\cite{IDP-Z3} grounds formulas in the \fodot language~\cite{FOdot}, a rich extension of classical logic, to SMT and then sends these to the Z3 solver~\cite{Z3}.
Its predecessor, the IDP3 system~\cite{IDP3}, uses the same input language and grounds it to an extended SAT format which is sent to a custom variant of the minisat SAT-solver~\cite{MinisatID}.
% \simon{Ergens ``Grounding bottleneck'' vermelden}

This paper is concerned with the question of how to efficiently generate a suitable simplified grounding, i.e., how to produce \eqref{eq:gr} from \eqref{eq:fo}.
In the above example, this is done based on the desired size of the domain ($n=8$) and the intended interpretation of the equality symbol (e.g., $1=1$ and $1 \neq 2$).
In general, this step may also take into account specific information about the problem instance. Consider for instance a map colouring problem:

\begin{equation} \forall x,y: x \neq y \land \mathit{border}(x,y) \Rightarrow \mathit{colour}(x) \neq \mathit{colour}(y). \end{equation}

Here, a problem instance  describes not only how many countries there are, but also the borders between them. In the simplification process, this information allows to reduce the above formula to a grounding such as:

\begin{equation}
    \mathit{colour}(\mathit{USA}) \neq \mathit{colour}(\mathit{Canada}) \land \mathit{colour}(\mathit{USA}) \neq \mathit{colour}(\mathit{Mexico}) \land \cdots
\end{equation}

In this paper, we present a method which uses bit vectors to carry out such reductions in an efficient way.
We begin by introducing some preliminaries in Section~\ref{s:preliminary}.
Next, we elaborate on our theoretical approach in Section~\ref{s:theory}, and we demonstrate a concrete implementation for formulas in which all symbols are interpreted (Section~\ref{s:implementation}), and for formulas with a mix of (non-)interpreted symbols (Section~\ref{s:moreimplementation}).
Lastly, we discuss relevant related work in Section~\ref{s:relatedwork}, present an evaluation in Section~\ref{s:evaluation}, and conclude in Section~\ref{s:conclusion}.

% We first present the theoretical approach in Section~\ref{theory}.
% After which we go into more detail on using bitvectors with our theoretical approach.
% Lastly we present an evaluation of our method and discuss interesting future work and also related work. 

% \simon{Hier een klein tekstje dat nog eens kort samenvat wat het doel is van de paper, en de structuur van de tekst toelicht}

\section{Preliminaries}\label{s:preliminary}

% \fodot: beetje FO, plus types, agg?

% Wat is interpretatie, en waarom is dat belangrijk voor grounding? (Of omgekeerd -- Wat is grounding, en waarom is interpretatie daar belangrijk voor. Misschien kunnen we eerst vermelden dat we naief alles kunnen uitgrounden zonder interpretatie, maar dat dat dom is.)

% Belangrijk: De structuur in FO(.) heeft two-valued interpretation -- intern heeft SLI four-valued (denk ik?)

% Model expand?

We consider classical first-order logic (FO). A \emph{vocabulary} $\Sigma$ is a set of predicate and function symbols, each of which has an associated arity $n\geq 0$.  A \emph{structure} for a vocabulary $\Sigma$ has a domain $D$ and assigns to each predicate symbol $P/n$ in $\Sigma$ a $n$-ary relation on $D$ (i.e., $P^S \subseteq D^n$) and to each function symbol $f/n$ an $n$-ary function on $D$ (i.e., $f^S$ is a function $D^n\rightarrow D$).  In this paper, we only consider structures with a finite domain $D$, which ensures that groundings are finite.

A \emph{term} is either a variable or a function symbol $f/n$ applied to $n$ terms. An \emph{atom} is a predicate symbol $P/n$ applied to $n$ terms. Formulas are constructed in the usual way by combining atoms with the logical operators $\forall, \exists, \lnot, \land, \lor$.  Implication $\phi \Rightarrow \psi$ and equivalence $\phi \Leftrightarrow \psi$ are abbreviations for $\lnot \phi \lor \psi$ and $(\phi \Rightarrow \psi)\land (\phi \Leftarrow \psi)$, respectively.  The satisfaction relation $S \models \phi$ between structures $S$ and formulas $\phi$ is also defined as usual.

In this paper, we assume a typed logic, in which each vocabulary designates a number of unary predicates $\tau_1,\ldots,\tau_n$ as types. We restrict attention to structures $S$ such that the domain of $\mathit{dom}(S)$ is partitioned by $\tau^S_1,\ldots,\tau^S_n$, i.e., each object in the domain belongs to exactly one type.
Whenever we use a variable $x$, we assume that we know the type over which $x$ ranges and denote this as $\tau_x$.
In practice, this can be achieved by asking the modeller to explicitly indicate the type of each variable, or by performing type inference.

It is useful to consider the grounding problem in the context of  Model Expansion (MX) \cite{Mitchell2006}:  we consider a formula $\phi$ in vocabulary $\Sigma$ and a structure $S_0$ for \emph{part of} $\Sigma$, i.e., if  $\Sigma_0$ is the set of symbols interpreted by structure $S_0$, we have that $\Sigma_0 \subseteq \Sigma$.
In the map colouring example above, $\Sigma_0 = \{ \mathit{border}/2, ={}/2\}$. We then consider the problem of
finding a structure $S_1$ for the vocabulary $\Sigma \setminus \Sigma_0$ (= $\{ \mathit{colour} \}$ in
the example) such that $S_0 \cup S_1 \models \phi$. Here, $S_1$ must be a structure that has the
same domain as $S_0$, so that the union $S_0 \cup S_1$ is simply the structure $S$ in which,
for each symbol $\sigma \in \Sigma$, if $\sigma \in \Sigma_0$ then $\sigma^S = \sigma^{S_0}$ and
if $\sigma \not \in \Sigma_0$, then $\sigma^S = \sigma{^S_1}$.
The set of all such structures $S_1$ is denoted as $MX(\phi,S_0)$, i.e.,
the \emph{model expansions} of $S_0$ w.r.t.~$\phi$.

In this context, we can view the goal of the grounding process as producing a propositional formula $\psi$ such that
$S_1 \models \psi$ if and only if $S_1 \in MX(\phi, S_0)$. 
Here, $\psi$ is a formula in vocabulary $\Sigma$, augmented with
constant symbols that represent the elements in
the domain of structure $S_0$ (such as the
constants \emph{USA}, \emph{Canada} and \emph{Mexico} above).  While $\psi$ could in principle still contain symbols from $\Sigma_0$, the purpose of the simplification process we consider in this paper would be to reduce the formula so that only symbols from $\Sigma\setminus \Sigma_0$ remain.

\section{Computing Satisfying Sets} \label{s:theory}

The key component of our approach is a method to compute the \emph{satisfying set} of a formula in a given structure, which we define as follows.
Let $\phi$ be a formula and let $\vec{x} = (x_1,\dots, x_n)$ be a tuple of variables that contains at least all the free variables of $\phi$.
We denote the pair $(\phi, \vec{x})$ as $\phi :: \vec{x}$. 
For such a $\phi :: \vec{x}$ and a structure $S$ that interprets all of the symbols in $\phi$,
the \emph{satisfying set} of $\phi :: \vec{x}$ in $S$ is the set of all $n$-tuples of domain elements $\vec{d} \in \tau_{x_1}^S \times \ldots \times \tau_{x_n}^S$ for which $S \models \phi[\vec{x} / \vec{d}]$.
In other words, when we interpret each variable $x_i$ as $d_i$, then $S$ satisfies $\phi$.
We denote the satisfying set of $\phi :: \vec{x}$ in $S$ by $[\phi :: \vec{x}]_S$.

A naive way of computing a satisfying set is by individually checking the value of
$\phi$ for each tuple of domain elements $\vec{d}$:%, as demonstrated in  Algorithm~\ref{alg:naive}.

\begin{algorithm}
\caption{Naive computation of satisfying set $[\phi::\vec{x}]_S$}
\label{alg:naive}
\begin{algorithmic}

\State $satset \gets \{\}$
%\For{domain element $\vec{d}$ in $\vec{x}$ of $F(\vec{x})$}
\For{$d_1 \in \tau_{x_1}^S$}
\State $\ddots$
\For{$d_n \in \tau_{x_n}^S$}
    \If{$S[x_1/d_1,\ldots,x_n/d_n] \models \phi$}
        \State $satset.add(\vec{d})$
    \EndIf
\EndFor
\State $\iddots$
\EndFor
\end{algorithmic}
\end{algorithm}

Since the nested for-loops are $O(k^n)$, with $k$ the maximal $\lvert\tau_{x_i}^S\rvert$,  Algorithm~\ref{alg:naive} can be quite slow for formulas
with many quantifiers.
As an alternative to Algorithm~\ref{alg:naive}, we can use a recursive approach to compute satisfying sets, as demonstrated in the following proposition.

% Another method is calculating the satisfying set of $F$ by applying operations specified
% in $F$ on the satisfying sets of the sub formulas of $F$, See Algorithm \ref{alg:vectorized}.
% Using this algorithm we can optimize our interpretation time by only considering
% the speed of set operations. Optimizing Algorithm \ref{alg:naive} is significantly harder since
% querying a formula requires stepping through all the sub formulas and evaluating each separately.
% \lucas{Add more here: explaining why it is hard to optimize sequential queries}

\begin{proposition}
\label{prop:satsetops}
    Let $\phi :: \vec{x}$ be a formula and S a structure that interprets all of its symbols.
    \begin{itemize}
        \item If $\phi$ is a conjunction $\psi_1 \land \psi_2$ then $[\phi :: \vec{x}]_S$ is $[\psi_1 :: \vec{x}]_S \cap {[\psi_2 :: \vec{x}]_S}$
        \item If $\phi$ is a disjunction $\psi_1  \lor \psi_2 $ then $[\phi :: \vec{x}]_S$ is $[\psi_1 :: \vec{x}]_S \cup {[\psi_2 :: \vec{x}]_S}$
        \item If $\phi$ is a negation $\lnot \psi$ then $[\phi :: \vec{x}]_S$ is $\tau_{x_1}^S \times \ldots \times \tau_{x_n}^S \setminus [\psi :: \vec{x}]_S$
        \item As a base case, for an atom $A$ of form $P(t_1, \ldots, t_m)$, $[A::\vec{x}]_S$ is the set of all $n$-tuples $\vec{d}$ such that $\vec{t}^{S[\vec{x}/\vec{d}]} \in P^S$.
    \end{itemize}
\end{proposition}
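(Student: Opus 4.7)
The plan is to prove each of the four cases by unfolding the definition of the satisfying set, showing equality of the two sets via a biconditional on membership of an arbitrary tuple $\vec{d} \in \tau_{x_1}^S \times \ldots \times \tau_{x_n}^S$. In each case, the chain of equivalences simply travels once through the definition of $[\cdot :: \vec{x}]_S$, once through the semantics of the relevant connective in the satisfaction relation $S \models \cdot$, and then back through the definition on the smaller subformula(s). One subtle point worth flagging is that $\vec{x}$ is allowed to strictly contain the free variables of $\phi$; but since the assignment to a non-free variable does not affect satisfaction, the types $\tau_{x_i}^S$ still determine the full ambient product correctly in each case.

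For the conjunction case, I would fix $\vec{d}$ and rewrite: $\vec{d} \in [\psi_1 \land \psi_2 :: \vec{x}]_S$ iff $S \models (\psi_1 \land \psi_2)[\vec{x}/\vec{d}]$ iff $S \models \psi_1[\vec{x}/\vec{d}]$ and $S \models \psi_2[\vec{x}/\vec{d}]$ (by the semantics of $\land$, noting that substitution distributes over the connective) iff $\vec{d} \in [\psi_1 :: \vec{x}]_S$ and $\vec{d} \in [\psi_2 :: \vec{x}]_S$ iff $\vec{d} \in [\psi_1 :: \vec{x}]_S \cap [\psi_2 :: \vec{x}]_S$. The disjunction case is entirely analogous, replacing $\land$ with $\lor$ and $\cap$ with $\cup$.

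For negation, after restricting to the ambient product $\tau_{x_1}^S \times \ldots \times \tau_{x_n}^S$, I would write: $\vec{d} \in [\lnot \psi :: \vec{x}]_S$ iff $S \models (\lnot \psi)[\vec{x}/\vec{d}]$ iff $S \not\models \psi[\vec{x}/\vec{d}]$ iff $\vec{d} \notin [\psi :: \vec{x}]_S$, which is exactly membership in the complement within the typed product. For the atom case, membership of $\vec{d}$ in $[P(t_1,\dots,t_m) :: \vec{x}]_S$ is by definition the statement that $S[\vec{x}/\vec{d}] \models P(t_1,\dots,t_m)$, and by the standard semantics of atomic formulas this holds exactly when the tuple of term values $\vec{t}^{S[\vec{x}/\vec{d}]}$ lies in $P^S$; this is precisely the formulation given in the proposition.

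There is no real obstacle here: the result is a set-theoretic reflection of the inductive clauses of the Tarskian definition of $\models$. The only thing one must be slightly careful about is distinguishing the ambient product $\tau_{x_1}^S \times \ldots \times \tau_{x_n}^S$ (needed so that ``$\vec{d} \notin [\psi :: \vec{x}]_S$'' is interpreted as a complement within the correct universe) from the satisfying sets themselves, and to remember that variables in $\vec{x}$ not occurring free in $\phi$ are harmless under substitution. Once these bookkeeping points are observed, each of the four items becomes a one-line equivalence chain.
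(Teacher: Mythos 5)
Your proposal is correct and matches the paper's approach: the paper simply states that the result ``follows immediately from the definition of $\models$,'' and your membership-equivalence chains are exactly the unfolding of that definition, spelled out case by case. The bookkeeping points you flag (the ambient typed product for the complement, and the harmlessness of variables in $\vec{x}$ not free in $\phi$) are sensible but do not change the substance.
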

\begin{proof}
    Follows immediately from the definition of $\models$.
\end{proof}

\begin{example}
    Let $\Sigma$ be a vocabulary with type $T$ and predicates $P/2$ and $Q/2$, for which both arguments are of type $T$.
    Let $S$ be a structure that contains the following interpretations: $T = \{ a, b \}$, $P = \{ (a, a), (b, a) \}$, and $Q = \{ (a, b), (a, a) \}$.
    % Let $S_P$ be an interpretation of $\Sigma$ for the symbol $P$ with value $\{ (a, a), (b, a) \}$.
    % Let $S_Q$ be an interpretation of $\Sigma$ for the symbol $Q$ with value $\{ (a, b), (a, a) \}$.
    % Let $S \supseteq S_P \cup S_Q$ with $dom(S) = dom(S_P) = dom(S_Q) = \{ a, b \}$.
    Let $F = P(x, y) \land \lnot Q(x, y)$.
    Using Proposition~\ref{prop:satsetops}, we can recursively calculate $[F::(x,y)]_S$. %by first calculating the satisfying sets of its subformulas.
    % First we need the satisfying sets of the subformulas of $F$.
    For $P(x,y)$, the satisfying set of $[P(x,y)::(x,y)]_S$ is $\{ (a, a), (b, a) \}$.    The satisfying set of the second conjunct $\lnot Q(x,y)$ is $T^S \times T^S \setminus [Q :: (x,y)]_S = \{(b,a), (b,b)\}$. % with $[Q(x,y)::(x,y)]_S$ equal to $\{ (a, b), (a, a) \}$ (base case).
    % As such, $[\lnot Q(x, y)::(x, y)]_S$ equals $\{(a, a), (a, b), (b, a), (b, b)\} \setminus \{(a, b), (a, a)\}$ which results in the following set $\{(b, a), (b, b)\}$.
%    As such, $[\lnot Q(x, y)::(x, y)]_S$ results in the following set $\{(b, a), (b, b)\}$.
 %   \simon{Heb tussenstap in bovenstaande zin weggedaan}
   Thus, $[F :: (x, y)]_S$ is $[P(x,y) :: (x,y)]_S \cap [\lnot Q(x,y) :: (x,y)]_S = \{(b, a)\}$.
\end{example}

A limitation of the results of Proposition~\ref{prop:satsetops} is that they only allow the satisfying set of $\phi::\vec{x}$ to be computed from the satisfying sets of $\phi$'s subformulas if these have precisely the same variables, i.e., if they are also of the form $\psi :: \vec{x}$.
Therefore, we cannot directly use these results to compute, e.g., the satisfying set of a formula $P(x) \land Q(y)$.
The following propositions introduce two ways of manipulating the variables of a formula, by either adding a free variable or swapping the position of two free variables.

\begin{proposition}
\label{prop:varadd}
    Let $\phi :: \vec{x}$ be a formula and $S$ a structure that interprets all of its symbols.
    For a variable $y \not \in \vec{x}$, let us consider the tuple $\vec{x'}$ of variables $(x_1, \dots, x_i, y, x_{i+1}, \ldots, x_n)$, for some $i \in 1..n$. Then $[\phi :: \vec{x'}]_S =\\ \{(d_1, \ldots, d_i, e, d_{i+1},\ldots, d_n) \mid (d1, \ldots, d_n) \in [\phi :: \vec{x}]_S$ and $e \in \tau_y^S\}$.
    %\simon{Ik snap niet goed hoe je aan de set van $\vec{x}'$ geraakt. Wat is die $n$? De grootte van $\vec{x}$? Kan het zijn dat dat $(x_1, \ldots, x_i, y, x_{i+1}, \ldots, x_n)$ moet zijn?}
\end{proposition}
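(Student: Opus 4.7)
The plan is to prove the set equality by direct unfolding of the definition of the satisfying set given earlier, using the fact that the truth value of $\phi$ in any variable assignment does not depend on variables that do not occur free in $\phi$.

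First I would observe that, by hypothesis, $\vec{x}$ already contains all free variables of $\phi$, and $y \notin \vec{x}$, so $y$ does not occur free in $\phi$. Consequently, for any assignment of $\vec{x}$ to $(d_1,\ldots,d_n)$ and any choice of $e \in \tau_y^S$, we have the standard semantic fact that
\[
S[x_1/d_1,\ldots,x_n/d_n] \models \phi
\quad\Longleftrightarrow\quad
S[x_1/d_1,\ldots,x_i/d_i,\, y/e,\, x_{i+1}/d_{i+1},\ldots,x_n/d_n] \models \phi.
\]
This is the only nontrivial ingredient and follows by a routine induction on $\phi$, which I would either cite as a standard lemma about FO satisfaction or sketch in one line.

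Next I would unfold the definition of $[\phi :: \vec{x'}]_S$. An element of this set is, by definition, a tuple $(d_1,\ldots,d_i,e,d_{i+1},\ldots,d_n) \in \tau_{x_1}^S \times \cdots \times \tau_{x_i}^S \times \tau_y^S \times \tau_{x_{i+1}}^S \times \cdots \times \tau_{x_n}^S$ such that $S[\vec{x'}/(d_1,\ldots,d_i,e,d_{i+1},\ldots,d_n)] \models \phi$. By the equivalence above this condition is equivalent to $S[\vec{x}/(d_1,\ldots,d_n)] \models \phi$, i.e.\ to $(d_1,\ldots,d_n) \in [\phi :: \vec{x}]_S$, together with $e \in \tau_y^S$. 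This gives both inclusions simultaneously, finishing the proof.

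I do not expect any real obstacles: the statement is essentially a bookkeeping lemma whose entire content is that inserting a dummy coordinate into the variable tuple produces the Cartesian product of the old satisfying set with $\tau_y^S$ (placed at the correct position). The only thing to be careful about is making explicit the appeal to the coincidence lemma for free variables, so that the proof does not look like it merely restates the claim.
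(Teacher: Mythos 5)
Your proof is correct: the paper itself states this proposition without any proof (only Proposition~\ref{prop:satsetops} gets the remark that it ``follows immediately from the definition of $\models$''), and your argument---unfolding the definition of the satisfying set and invoking the coincidence lemma, which applies because $\vec{x}$ already contains all free variables of $\phi$ and hence $y$ is not free in $\phi$---is exactly the routine verification the authors leave implicit. No gaps; the explicit appeal to the coincidence lemma is the right thing to make the argument non-circular.
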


\begin{proposition}
\label{prop:varswap}
    Let $\phi :: \vec{x}$ be a formula and $S$ a structure that interprets all of its symbols.
    For $i,j \in 1..n$, let $\vec{x'}$ be the tuple of variables in which $x_i$ and $x_j$ have been swapped,
    i.e., if $\vec{x} = (\ldots, x_i, \ldots, x_j, \ldots)$ then $\vec{x'} = (\ldots, x_j, \ldots, x_i,\ldots)$.
    Here, $[\phi :: \vec{x'}]_S = \{(\ldots, d_j, \ldots, d_i, \ldots) \mid (\ldots, d_i, \ldots, d_j, \ldots) \in [\phi :: \vec{x}]_S\}$
\end{proposition}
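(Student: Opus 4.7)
The plan is to unfold the definition of the satisfying set on both sides and observe that the substitution associated with $\vec{x'}$ is the same as the one associated with $\vec{x}$, provided the corresponding coordinates of the tuple of domain elements are also swapped. Since the free variables of $\phi$ form a set, not a sequence, the order in which we list them in a substitution does not matter; all the order encodes is the bookkeeping that links slot $k$ of the tuple $\vec{x}$ to slot $k$ of the tuple $\vec{d}$. So swapping two slots of $\vec{x}$ is compensated exactly by swapping the same two slots of $\vec{d}$.

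Concretely, I would proceed as follows. First I would fix $i \neq j$ and write $\vec{x} = (x_1,\ldots,x_i,\ldots,x_j,\ldots,x_n)$ and $\vec{x'} = (x_1,\ldots,x_j,\ldots,x_i,\ldots,x_n)$. I would then take an arbitrary tuple $\vec{e} = (e_1,\ldots,e_n) \in \tau_{x'_1}^S \times \cdots \times \tau_{x'_n}^S$ and consider the condition $\vec{e} \in [\phi :: \vec{x'}]_S$. By definition this is $S \models \phi[\vec{x'}/\vec{e}]$, and the substitution $\vec{x'}/\vec{e}$ sends $x_j \mapsto e_i$ and $x_i \mapsto e_j$, while sending $x_k \mapsto e_k$ for $k \notin \{i,j\}$. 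Setting $\vec{d} = (e_1,\ldots,e_j,\ldots,e_i,\ldots,e_n)$ (the tuple obtained from $\vec{e}$ by swapping coordinates $i$ and $j$), the substitutions $\vec{x'}/\vec{e}$ and $\vec{x}/\vec{d}$ coincide as set-theoretic maps from variables to domain elements, hence $\phi[\vec{x'}/\vec{e}]$ and $\phi[\vec{x}/\vec{d}]$ are syntactically the same formula. Thus $S \models \phi[\vec{x'}/\vec{e}]$ iff $\vec{d} \in [\phi :: \vec{x}]_S$, which is exactly the claimed equality. A brief remark should note that $\tau_{x'_k}^S = \tau_{x_k}^S$ for $k \notin \{i,j\}$, while $\tau_{x'_i}^S = \tau_{x_j}^S$ and $\tau_{x'_j}^S = \tau_{x_i}^S$, so the types line up correctly under the coordinate swap.

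There is no real obstacle here beyond index bookkeeping; the only subtle point is making sure the reader sees that $i = j$ and $i \neq j$ can be treated uniformly (the $i=j$ case is trivially the identity), and that when $x_i$ and $x_j$ have different types the statement still makes sense because the allowed tuples for $\vec{x'}$ range over the correspondingly swapped Cartesian product. Once those remarks are in place, the proof is a one-line appeal to the definition of $\models$ after observing the equality of the two substitutions.
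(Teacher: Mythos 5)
Your proof is correct and matches the paper's (implicit) treatment: the paper states this proposition without proof, regarding it as an immediate consequence of the definition of the satisfying set, and your argument simply spells out that immediacy by observing that the substitutions $\vec{x'}/\vec{e}$ and $\vec{x}/\vec{d}$ coincide as maps once the coordinates $i$ and $j$ of the domain tuple are swapped. The remarks on type bookkeeping and the trivial $i=j$ case are sound and add nothing beyond what the paper takes for granted.
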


% \simon{Evt hier een example, als er plaats is?}

Finally, we also need to consider the case of quantifications.

\begin{proposition}
\label{prop:quant}
    Let $\phi :: \vec{x}$ be a formula and $S$ a structure that interprets all of its symbols.
    \begin{itemize}
        \item if $\phi :: (\ldots, x_{i-1}, x_{i+1}, \ldots)$ is a universally quantified formula $\forall x_i: (\psi :: \vec{x})$, then $[\phi :: (\ldots, x_{i-1}, x_{i+1}, \ldots)]_S =$ \\$\{ (\ldots, d_{i-1}, d_{i+1}, \ldots) \mid \forall d: (\ldots, d_{i-1}, d, d_{i+1}, \ldots) \in [\phi :: \vec{x}]_S\}$
        \item if $\phi :: (\ldots, x_{i-1}, x_{i+1}, \ldots)$ is an existentially quantified formula $\exists x_i: (\psi :: \vec{x})$, then $[\phi :: (\ldots, x_{i-1}, x_{i+1}, \ldots)]_S =$ \\$ \{ (\ldots, d_{i-1}, d_{i+1}, \ldots) \mid \exists d: (\ldots, d_{i-1}, d, d_{i+1}, \ldots) \in [\phi :: \vec{x}]_S\}$
    \end{itemize}
\end{proposition}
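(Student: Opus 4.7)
The plan is to establish each bullet by unwinding the definition of the satisfying set and then applying the standard Tarskian semantics of the quantifier, in the same spirit as the proof sketch indicated for Proposition~\ref{prop:satsetops}. (I also read the occurrence of ``$[\phi :: \vec{x}]_S$'' on the right-hand side as a typo for ``$[\psi :: \vec{x}]_S$'', since $\phi$ does not have $x_i$ as a free variable.)

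For the universal case, I would fix an arbitrary tuple $\vec{e} = (e_1,\ldots,e_{i-1},e_{i+1},\ldots,e_n)$ ranging over $\tau_{x_1}^S \times \cdots \times \tau_{x_{i-1}}^S \times \tau_{x_{i+1}}^S \times \cdots \times \tau_{x_n}^S$ and show the two membership conditions coincide by a chain of ``iff'' steps: (i) by definition of satisfying set, $\vec{e} \in [\phi :: (\ldots,x_{i-1},x_{i+1},\ldots)]_S$ iff $S \models \phi[x_1/e_1,\ldots,x_{i-1}/e_{i-1},x_{i+1}/e_{i+1},\ldots,x_n/e_n]$; (ii) since $\phi$ is $\forall x_i\colon \psi$, the semantics of $\forall$ in our typed logic yields that this holds iff, for every $d \in \tau_{x_i}^S$, we have $S \models \psi[x_1/e_1,\ldots,x_{i-1}/e_{i-1},x_i/d,x_{i+1}/e_{i+1},\ldots,x_n/e_n]$; (iii) applying the definition of satisfying set in the other direction, this is equivalent to: for every $d \in \tau_{x_i}^S$, $(e_1,\ldots,e_{i-1},d,e_{i+1},\ldots,e_n) \in [\psi :: \vec{x}]_S$. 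The last line is exactly the defining condition of the right-hand side.

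The existential case is fully symmetric: the same three steps go through verbatim after replacing the universal semantics of $\forall$ by the existential semantics of $\exists$ in step (ii) and the universal quantifier by an existential one in step (iii).

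There is no real obstacle; the only point that requires care is the typing, namely making sure the quantification over $d$ is understood to range over $\tau_{x_i}^S$. This is precisely what the typed semantics of the quantifiers delivers, and it matches the implicit typing used when $\vec{x}$ is said to range over $\tau_{x_1}^S \times \cdots \times \tau_{x_n}^S$, so the bookkeeping is routine.
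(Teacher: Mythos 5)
Your proof is correct and matches the paper's approach: the paper leaves Proposition~\ref{prop:quant} without an explicit proof, treating it (like Proposition~\ref{prop:satsetops}) as following immediately from the definition of $\models$, which is exactly the unfold--apply-Tarskian-semantics--refold argument you spell out. Your reading of the right-hand side's ``$[\phi :: \vec{x}]_S$'' as a typo for ``$[\psi :: \vec{x}]_S$'' is also the intended one, since the whole point is to compute the satisfying set of the quantified formula from that of its body.
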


\section{Implementation using bit vectors}\label{s:implementation}
\label{bvimpl}

The previous section shows that we can recursively derive the satisfying set of a formula from the satisfying sets of its subformulas. To do this in an efficient way, however, a good representation for these sets is needed. In this section, we will describe how we can build an efficient implementation by using bit vectors to represent  satisfying sets.  Key to this approach is that the \emph{Single Instruction, Multiple Data (SIMD)} parallelism of  modern processors can execute logical operations on large bit vectors very efficiently. Representing a satisfying set by a vector can be done as follows.

To represent a satisfying set for a formula $\phi::(x)$ with a single free variable $x$, it suffices to fix an order on the domain elements $d_1, \ldots, d_n \in \tau^S_{x}$.  We can then represent $[\phi::(x)]_S$ by a bit vector $(b_1, \ldots, b_n)$ of size $n$, with each $b_i \in \{0,1\}$: for each $d_i \in \tau_x^S$, $b_i = 1$ iff $d_i \in [\phi::(x)]_S$.

Once we have fixed an ordering on the interpretation $\tau_i^S$ of each type $\tau_i$, we can derive an ordering on tuples in $\prod_{i \in 1..n} \tau_{x_i}^S$, e.g., by means of a lexicographical ordering.  This then allows us to also represent satisfying sets for formulas $\phi::(x_1, \ldots, x_n)$ with more than one variable. Note that this requires a bit vector of size $\prod_{i \in 1..n} m_i$, where each $m_i$ is the size of $\tau_{x_i}^S$.
% \simon{Deze laatste snap ik nog niet}

Following Proposition~\ref{prop:satsetops}, we can apply bitwise and/or/not-operations to combine the satisfying sets of $\psi_1::\vec{x}$ and $\psi_2::\vec{x}$ into a satisfying set for $\psi_1\land \psi_2, \psi_1\lor \psi_2$ and $\lnot \psi_1$, respectively.

%For multiple variables $(x_1, \ldots, x_n)$ we extend the fixed orderings on the individual interpretations $\tau_{x_i}^S$ to an ordering on 
%in practice we use the lexicographical ordering for this.
%We then enumerate all of the tuples $\vec{d}_1, \ldots, \vec{d}_k \in \prod_{i \in 1..n} \tau_{x_i}^S$ along this ordering and, with $k = m_1 \cdot \ldots \cdot m_n$ and each $m_i$ the size of $\tau_{x_i}^S$, represent the satisfying set of a formula $\phi::\vec{x}$ by a bit vector of size $k$.

%For the rest of this chapter, unless specified otherwise, structure $S$ is used which contains type $A = \{ a, b, c \}$ and predicates $p/1$ and $q/1$ over $A$.
%$p$ and $q$ are always interpreted in $S$, but their interpretation may vary between examples.
%We now show how we can construct the bit vector representing a satisfying formula using the bit vectors of it's subformulas, using the propositions from the previous section.
%For instance, we can represent the formula $p(x)$ using the bit vector $\left[ \begin{array}{c c c} 1 & 0 & 1 \end{array} \right]$.
%Here the first bit would then represent that the formula $p(x)$ is true for $x = a$, the second for $x = b$ and the third for $x = c$.

% Most operations between bit vectors can be parallelized easily using
% SIMD operations available on most CPUs.
% \lucas{
%     some explaining about SIMD (and CPU)?
%     Also which operations are easily SIMDable and which are not.
% }

% In our implementation we try to make use of most modern processors capability for Single instruction Multiple Data capabilities for even faster set operations.
\begin{example}
Let $F::(x)$ be the formula $p(x) \lor q(x)$ and let $S$ interpret $\tau_x^S = \{ a,b,c\}$, $p^S = \{b\}$ and  $q^S=\{b,c\}$.  Using the order $a < b < c$, we can calculate the satisfying set $[F::(x)]_S = \{b,c\}$ of $F$ as:

\[
\left[ \begin{array}{c c c} 0 & 1 & 0 \end{array} \right]
\lor 
\left[ \begin{array}{c c c} 0 & 1 & 1 \end{array} \right]
=
\left[ \begin{array}{c c c} 0 & 1 & 1 \end{array} \right]
\]
\end{example}

Following Proposition~\ref{prop:varadd}, starting from the satisfying set for $\phi::\vec{x}$, we can compute the satisfying set for $[\phi::\vec{x} \cup \{y\}]$ by essentially constructing the Cartesian product of $[\phi::\vec{x}]_S$ and $\tau_y^S$.  This boils down to copying the original bit vector $\lvert \tau_y^S\rvert$ times.  Continuing on the above example, we use $[q(x)::(x)]_S$ to construct $[q(x)::(x,y)]_S$, where we depict the second bit vector in a 2-dimensional layout for clarity:

\begin{minipage}{.44\linewidth}
\begin{equation*}
\label{eq:addvar}
\begin{array}{c c}
    
q(x) :: (x) \rightarrow
\begin{array}{c}
%    x \\
%    \begin{array}{c c c}
%        a & b & c\\
 %   \end{array}\\
    \left[
    \begin{array}{c c c}
        0 & 1 & 1 
    \end{array}
\right]
\end{array} 
\end{array}
\end{equation*}
\end{minipage}
\begin{minipage}{.49\linewidth}
\begin{equation*}
q(x) :: (x\; y) \rightarrow
\begin{array}{c c c}
    & & x \\
    & & \begin{array}{c c c}
        a & b & c\\
    \end{array}\\
    y &
    \begin{array}{c}
        a \\
        b \\
        c \\
    \end{array} & \left[
    \begin{array}{c c c}
    0 & 1 & 1  \\
    0 & 1 & 1  \\
    0 & 1 & 1  \\
    \end{array} \right] \\ \\
\end{array}\\
% \end{array}
\end{equation*}
\end{minipage}

Proposition~\ref{prop:varswap} allows us to swap two variables, which corresponds to reordering some of the bits in the bit vector. Again, the effect is easiest to see when viewing the bit vector as an $n$-dimensional array, in which case it  corresponds to swapping the axes.  For a formula with two variables, we therefore simply transpose the matrix:

\begin{minipage}{.44\linewidth}
\begin{equation*}
\label{eq:reorder}
\begin{array}{c c}

q(x) :: (x\; y) \rightarrow
\begin{array}{c c c}
    & & x \\
    & & \begin{array}{c c c}
        a & b & c\\
    \end{array}\\
    y &
    \begin{array}{c}
        a \\
        b \\
        c \\
    \end{array} & \left[
    \begin{array}{c c c}
    0 & 1 & 1  \\
    0 & 1 & 1  \\
    0 & 1 & 1  \\
    \end{array} \right] \\
\end{array}\\
\\
\end{array}
\end{equation*}
\end{minipage}
\begin{minipage}{.49\linewidth}
\begin{equation*}
\begin{array}{c c}
q(x) :: (y\; x) \rightarrow
\begin{array}{c c c}
    & & y \\
    & & \begin{array}{c c c}
        a & b & c\\
    \end{array}\\
    x &
    \begin{array}{c}
        a \\
        b \\
        c \\
    \end{array} & \left[
    \begin{array}{c c c}
    0 & 0 & 0  \\
    1 & 1 & 1  \\
    1 & 1 & 1  \\
    \end{array} \right] \\
\end{array}\\

\end{array}
\end{equation*}
\end{minipage}

Combining these operations, we can for instance compute $[(p(y) \land q(x))::(y\;x)]_S = \{(b,b), (b,c)\}$ as $[p(y) :: (x, y)] \land [q(x) :: (x, y)]$:

\[
  \begin{array}{c c c}
    & & y \\
    & & \begin{array}{c c c}
        a & b & c\\
    \end{array}\\
    x &
    \begin{array}{c}
        a \\
        b \\
        c \\
    \end{array} & \left[
    \begin{array}{c c c}
    0 & 1 & 0  \\
    0 & 1 & 0  \\
    0 & 1 & 0  \\
    \end{array} \right] \\ \\
  \end{array}
\quad  \land \quad
\begin{array}{c c c}
    & & y \\
    & & \begin{array}{c c c}
        a & b & c\\
    \end{array}\\
    x &
    \begin{array}{c}
        a \\
        b \\
        c \\
    \end{array} & \left[
    \begin{array}{c c c}
    0 & 0 & 0  \\
    1 & 1 & 1  \\
    1 & 1 & 1  \\
    \end{array} \right] \\ \\
\end{array}
\quad = \quad
\begin{array}{c c c}
    & & y \\
    & & \begin{array}{c c c}
        a & b & c\\
    \end{array}\\
    x &
    \begin{array}{c}
        a \\
        b \\
        c \\
    \end{array} & \left[
    \begin{array}{c c c}
    0 & 0 & 0  \\
    0 & 1 & 0  \\
    0 & 1 & 0  \\
    \end{array} \right] \\ \\
\end{array}
\]

Finally, Proposition~\ref{prop:quant} shows that we can compute the satisfying set for a universal (or existential) quantifier by collapsing the corresponding dimension of our bit matrix using a conjunction (or disjunction) operation.  For instance, if the satisfying set of some formula $F::(x,y)$ is as follows, we can compute the satisfying set of $(\forall y~F)::(x)$ as:

\begin{equation*}
\label{eq:var_elim}
\begin{array}{c c}

F :: (x, y) \rightarrow
\begin{array}{c c}
    & x \\
    y &
    \left[
    \begin{array}{c c c}
    1 & 1 & 1  \\
    1 & 0 & 0  \\
    1 & 1 & 0  \\
    \end{array} \right] \\
\end{array}\hspace{2cm}
(\forall y~F) :: (x) \rightarrow
\begin{array}{c c}
    x \\
    \left[
    \begin{array}{c c c}
    1 & 0 & 0  \\
    \end{array} \right] \\
\end{array}\\

\end{array}
\end{equation*}

%Satisfying sets of other operations such as $\exists$ and aggregates can be calculated
%in the same manner with a different reduction operations.

A final point of implementation concerns the interpretation of terms.  Consider for instance the formula $f(x) + g(x) = 5$, where $f$ and $g$ are functions (interpreted by $S$, as are all our symbols) that have some interval $[i,j]$ of natural numbers as their codomain.  From a logical perspective, this is an atom in which the equality predicate $=/2$ is applied to the arguments $f(x) + g(x)$ and $5$, both of which are terms.  To apply our approach, we need to represent the satisfying set for this atom by a bit vector of size $\lvert \tau_x^S \rvert$, in which there is a \textit{1} whenever $(f(x)  + g(x))^S = 5$.  One way of constructing this bit vector could be to simply iterate over all elements $d$ of $\tau_x^S$ and compute $(f(x)  + g(x))^{S[x/d]}$ for each one. 
However, it is likely more efficient to represent these functions by vectors of natural numbers directly, instead of bit vectors, to avoid such a costly computation.
In other words, we represent $f$ by the vector $(f^S(d_1)\ldots f^S(d_n))$, where $d1,\ldots, d_n$ is the enumeration of $\tau_x^S$ and $f^S$ is the interpretation of the function $f$ in $S$.  In this way, we can again use the efficient SIMD-parallelism of modern processors to compute the sum vector for $f+g$ from the individual vectors for $f$ and $g$.

\section{Using satisfying sets for grounding}\label{s:moreimplementation}
\label{satset_ground}

So far, we have computed satisfying sets $[\phi]_S$ of formulas under the assumption that $S$ interprets all the symbols in $\phi$.  As explained in Section \ref{s:preliminary}, however, our ultimate goal is to start from a structure $S_0$ that interprets only \emph{some} of the symbols in $\phi$, and to use this information to reduce $\phi$ to an equivalent (in the context of $S_0$) formula $\psi$ that contains only symbols not interpreted in $S$.  
%We will now briefly elaborate on this.

Suppose that $\phi$ is of the form $\forall \vec{x}~\gamma \Rightarrow \psi$ or $\exists \vec{x}~\gamma \land \psi$, such that $S_0$ interprets the symbols in $\gamma$. In this case, we can use the satisfying sets for $\gamma$ in $S_0$ to ground $\phi$ to $\bigwedge_{\vec{d} \in [\gamma::\vec{x}]_{S_0}} \psi[\vec{x} / \vec{d}]$
or  $\bigvee_{\vec{d} \in [\gamma::\vec{x}]_{S_0}} \psi[\vec{x} / \vec{d}]$, respectively.

To transform arbitrary formulas into one of these two forms, we use the following approach. We first identify all maximal subformulas of $\phi$ that contain only symbols interpreted by $S_0$.  Denoting these maximal subformulas as $\gamma_1, \ldots, \gamma_m$, we then construct all $2^m$ possible formulas of the form $\forall \vec{x}~(\bigwedge_{i=1}^m (\lnot) \gamma_i) \Rightarrow \psi'$, in which each $\gamma_i$ appears either negated or not. Here, $\psi'$ is the result of replacing in $\psi$ each subformula $\gamma_i$ by either $\top$ or $\bot$, depending on whether $\gamma_i$ appears negated or not.

\begin{example}
    Let $\Sigma$ be a vocabulary with type $T$ and predicates $P/1$ and $Q/1$,
    for which the arguments are of type $T$.
    Let $S$ be a structure that contains the interpretations of $T$ and $P$.
    Let $F = \forall x: P(x) \lor Q(x)$.
    We can transforming this formula into the form $\forall \vec{x}~\gamma \Rightarrow \psi$.
    This results in the following formula: $(\forall x: P(x) \Rightarrow Q(x) \lor \top) \land (\forall x: \lnot P(x) \Rightarrow Q(x) \lor \bot)$.
    Further simplifying this formula gets us $\forall x: \lnot P(x) \Rightarrow Q(x)$.
    Since $P$ is interpreted in $S$, when grounding this formula we only need to consider values for $x$ such that $\lnot P(x)$ holds.
  \end{example}
  
Even though we can efficiently compute a bit vector representation $\vec{b}$ for $[\gamma::\vec{x}]_{S_0}$, note that to actually compute the grounding, we still have to iterate over all tuples in $\prod_{x_i \in \vec{x}} \tau_{x_i}$ and, for each tuple, check the value of the corresponding bit $b_j \in \vec{b}$.  Therefore, in the worst case, this approach may not actually be faster than Algorithm~\ref{alg:naive}.  Only if the formulas $\gamma$ are sufficiently complex would we expect to see gains.

\section{Related Work}\label{s:relatedwork}

Bit vectors (also known as bitmaps) are used in databases and search engines when applicable for their high speed for set operations~\cite{lemireConsistentlyFasterSmaller2016}. As far as we are aware, however, there is no work that that uses bit vectors to make grounding more efficient in a ground-and-solve architecture. Since the grounding bottleneck is a substantial challenge in current development of logical reasoners, there is quite a lot of literature surrounding efficient grounding.

\textit{Top-down} approaches follow strategies similar to Algorithm \ref{alg:naive}, possibly optimising the way in which formulas are simplified along the way. For instance, the IDP-Z3 reasoning engine~\cite{IDP-Z3} largely follows this approach to grounding.

In the field of Answer Set Programming, most modern grounders utilize various database evaluation techniques~\cite{Kaminski2023} to ground \textit{bottom-up} instead.
% \simon{Toch nog eens nakijken of dit wel echt zo is.. volgens mij bedoelen ze daar mogelijk iets anders mee?}
This, paired with clever simplification strategies, results in efficient grounding algorithms.
For a comprehensive overview of grounding in ASP, we refer to~\cite{Kaminski2023}.

Another possible approach to reduce the grounding bottleneck is \textit{lazy grounding}~\cite{DalPalu2009,DeCat2015,Leutgeb2018}, in which grounding and solving are interleaved by executing multiple \textit{ground-and-solve} steps, each on a part of the theory at a time.
The general idea is to prevent a blow-up in grounding by iteratively deriving new facts, which can be used to more efficiently ground the next part of the theory.
It may be useful to investigate whether lazy grounding could also integrated into out approach.

In \cite{wittocxGroundingFOID2010}, the authors describe an advanced interpretation/simplification of formulas by computing so-called ``bounds'', which are used to ground formulas w.r.t. redundant (interpreted) symbols.
In essence, their grounder automatically derives symbolic upper and lower bounds for each formula, which it can add to each formula as redundant information.
These bounds allow the grounder to, in some cases, significantly reduce the resulting grounding.
Here too, the approach is quite complimentary to the one presented in this paper, as generating such bounds would allow our algorithm to more efficiently perform the technique outlined in Section~\ref{s:moreimplementation}.

One completely different approach to grounding consists of directly using a relational database system to perform this process~\cite{Augustine2023,Niu2011}.
Here, grounding is expressed as a sequence of SQL queries, which are executed by the database system which in turn fill tables with ground variables representing the grounding.
Crucially, this technique allows leveraging the broad number of query optimization techniques~\cite{Chaudhuri1998} that nowadays come pre-baked in most, if not all, commercially available databases.

% \simon{Wat moet er hierover gezegd worden?}
% Aligning layouts en database joins (misschien future work?).
\section{Evaluation}\label{s:evaluation}

\begin{table}[]
\caption{(Top) average grounding time in seconds for given benchmark. \\(Bottom) average grounding+solve time, number of timeouts between brackets.}
\label{tab:ground_time}
\centering
\begin{tabular}{lSSSSS}
    \toprule
    Benchmark & {SLI vec} & {SLI naive} & {SLI no reduc} & {gringo} & {IDP-Z3}\\
    \midrule
    CI-SAT          & 0.920 & \B 0.860 & 2.179(30) & 3.744 & {timeout} \\
    CI-UNSAT        & \B 0.681 & 0.806 & 1.661(28) & 2.844 & {timeout} \\
    CS-SAT         & \B 0.774 & 0.893 & 1.771(24) & 3.429 & {timeout} \\
    CS-UNSAT       & 0.765 & \B 0.710 & 1.764(31) & 3.604 & {timeout} \\
    GG           & \B 0.035(2) & 0.045(2) & 0.036(2) & 0.036(3) & 0.232(2) \\
    GC           & 0.052(42) & 0.068(42) & 0.082(43) & \B 0.024(32) & 0.309(42)\\
    PPM& \B 0.069 & 0.097(5) & 0.084 & 7.810(4) & 2.55(1)\\
    WSP & 0.031 & 0.033 & 0.0304 & \B 0.0165 & 0.205\\
    TG & 0.075(4) & 2.06(5) & 3.57(6) & \B 0.0238 & 6.81(6)\\
    \midrule
    \midrule
    CI-SAT & 1.007 & \B 0.947 & 2.279(30) & 4.089 & {timeout} \\
    CI-UNSAT & \B 0.687 & 0.812 & 1.676(28) & 3.108 &  {timeout} \\
    CS-SAT & \B 0.842 & 0.964 & 1.868(24) & 3.886 &  {timeout} \\
    CS-UNSAT & 0.771 & \B 0.716 & 1.778(31) & 3.883 & {timeout}  \\
    GG &  64.12(2) & 63.91(2) & 58.58(2) & 96.25(3) & \B 35.28(2)\\
    GC &  105.0(42) & 105.7(42) & 113.81(43) & 62.54(32) & \B 40.75(42) \\
    PPM & 24.46 & 24.51(5) & \B 18.10 & 41.89(4) & 38.61(1)\\
    WSP &11.44 & 12.21 & 11.40 & \B 10.12 & 19.21 \\
    TG & 0.458(4) & 2.45(5) & 4.09(6) & \B 0.0332 & 8.679(6)\\
    \bottomrule\\
\end{tabular}
\end{table}

The techniques described in Sections~\ref{s:implementation} and~\ref{s:moreimplementation} have been implemented in a new grounder,  named SLI.  Similar to IDP-Z3, it accepts problems formalized in the \fodot language, which it grounds to SMT-LIB that is passed on to the Z3 solver.
SLI is written in Rust, and available under the LGPLv3 license\footnote{\url{https://gitlab.com/EAVISE/sli/SLI}}.

We compare three grounding methods in SLI:
\begin{itemize}
    \item \emph{SLI vec}: The bit vector approach presented in this paper (\textit{SLI vec}).
    \item \emph{SLI naive}:  Algorithm~\ref{alg:naive} to naively compute a reduced grounding (\textit{SLI naive}).
    \item \emph{SLI no reduc}: Algorithm~\ref{alg:naive} without the \emph{if}-test: this computes the full non-reduced grounding and passes it on to Z3, together with the interpretations for the known predicates.
\end{itemize}
We also include clingo v5.7.1 (using \texttt{--mode gringo} to measure grounding) and IDP-Z3 v0.11.1 in our benchmark.
The benchmark is run on a Xeon Silver 4210R CPU with 16 GB of RAM with a timeout of 600 seconds.
% \simon{Hier ook informatie over het benchmarkmachine (CPU, \#RAM, ...)}

As our work concerns grounding, we use a number of benchmarks that require only trivial solving.
For two unary predicates $p$ and $q$, the  \textit{CommonItem (CI)} benchmark checks whether $p$ \textit{and} $q$ holds for a common element (i.e., $\exists x \in \tau: p(x) \land q(x)$) and the \textit{CompleteSet (CS)} benchmarks checks whether $p$ \textit{and} $q$ cover the entire domain (i.e., $\forall x \in \tau: p(x) \lor q(x)$).
We generate random instances for both benchmarks and split them into satisfiable (SAT) and unsatisfiable (UNSAT) instances.
For CS-SAT instances, we consider three levels (0.1, 0.001 and 0.0001) of overlap of between $p$ and $q$, i.e., the ratio of the size of $[p(x) \land q(x)]_S$ to the total size the type $\tau$. (For CS-UNSAT, the overlap ratio is always 0.) Similarly, for CI-UNSAT instances, we consider three levels of non-coverage, i.e., the ratio of $[\neg p(x) \land \neg q(x)]_S$ to the total size of type $\tau$; for CI-SAT instances, this ratio is always 0.

To gauge the effect of sparse interpretations, we also introduce a \textit{Triangle Graph (TG)} benchmark, in which the goal is to identify all triangles in a graph, i.e., triples $(x,y,z)$ such that $\mathit{edge}(x, y) \land \mathit{edge}(y, z) \land \mathit{edge}(z, x)$.  We vary the number of nodes $n$ and generate $n/3$ edges, so the ratio of actual edges to potential edges is $\frac{3}{n}$, i.e., larger graphs grow linearly sparser.
%\joost{Ik zou misschien deze benchmark een beetje anders doen: Een grafiek waarin het aantal knopen toeneemt en je altijd $0.01 n^2$ edges hebt; een tweede grafiek waarin het aantal knopen toeneemt en je altijd $0.1 n^2$ edges hebt; en een derde grafiek waarin je altijd $0.5 n^2$ edges hebt?} 

The above benchmarks are intended to evaluate the grounding time in different ways.  However, this is only part of the story, since also grounding quality matters, i.e., whether the grounding can also be efficiently used by the solver. We therefore also include four benchmarks from the ASP2013 competition~\cite{ASPComp2013}, which still require a significant solving effort once the grounding has been produced: Graceful Graph (GG), Graph Coloring (GC), Permutation Pattern Matching (PPM) and Weighted Sequence Problem (WSP).

%Alongside them, we have added two new benchmarks: \textit{CommonItem} and \textit{CompleteSet}.
%In both benchmarks, we are given two sets containing some numbers between $0$ and $n$.
%The goal of CommonItem is to check whether these sets share at least one item, and the goal of CompleteSet is to check whether these sets together cover the entire domain.
%In essence, they respectively correspond to checking ``$\exists x \in \tau: p(x) \land q(x)$'' and ``$\forall x \in \tau: p(x) \lor q(x)$''.

All benchmarks and scripts are available online\footnote{\url{https://gitlab.com/EAVISE/sli/vectorized_interpretation_benchmark}}.

Table~\ref{tab:ground_time} shows the average grounding time (top) and the average total grounding+solving time (bottom) over all instances of each benchmark.  The average times for CI and CS are  computed from 50 different instances, each having a randomly chosen domain size between 50~000 and 500~000.  For the ASP2013 benchmarks, the averages are computed by running all the instances of each benchmark.
Timeouts are mentioned separately and omitted from the average.

The grounding averages show that the two SLI implementations that compute a reduced grounding (i.e., \emph{SLI vec} and \emph{SLI naive}) outperform the other approaches on all benchmarks apart from \emph{TG}.  Of these two, the bitvector approach \emph{SLI vec} indeed seems faster overall, even though \emph{SLI naive} also wins two benchmarks and the difference is never large.  When we consider the total grounding+solving time, we see that, as expected, the CS and CI benchmarks require very little solving, so the grounding results just carry over. For the ASP competition benchmarks, grounding only accounts for a small portion of the total time.  We can observe here that the SLI+Z3 approach is quite competitive for these benchmarks, indicating that the groundings we produce are indeed suitable for efficient solving.  It is interesting to note that  also \emph{SLI no reduc}  does quite well on these benchmarks, which indicates that for solving-heavy benchmarks, having a larger grounding may actually help Z3 (e.g., by allowing the CDCL algorithm to learn more useful clauses).  

% Looking closer at these numbers, we identify a few interesting details.
% To begin with, our vectorized approach succeeds in grounding fairly quickly, resulting in competitive grounding times in comparison to the others.
% However, besides for the Permutation Pattern Matching benchmark, the grounding time of the ASP competition benchmarks seems to only have barely noticeable impact on the total times.
% We believe that this is due to the competition's focus on problems that are hard to solve, rather than difficult to ground.

Figure~\ref{fig:setgraphs} analyses the CI and CS benchmarks in more detail. Here, each data point represents the average of 3 different instances of the same size.  For all of these benchmarks, because $p$ and $q$ are provided as input, both of the reducing SLI variants (i.e., \emph{SLI vec} and \emph{SLI naive}) will always produce either $\bot$ or $\top$ as grounding, so there is no further need to invoke a solver.  By contrast, both \emph{SLI no reduc} and \emph{gringo} seem to produce a non-trivial grounding, for which the solver is then invoked.  When comparing \emph{SLI vec} and \emph{SLI naive}, there is clear difference between CS-SAT/CI-UNSAT on the one hand and CS-UNSAT/CI-SAT on the other hand.  This is explained by the fact that in CS-UNSAT and CI-SAT, a single domain element (either one that does not belong to either $p$ or $q$, or one that belongs to both $p$ and $q$, respectively) suffices to immediately conclude that the instance is UNSAT or SAT, respectively.  Our implementation of Algorithm~\ref{alg:naive}  takes advantage of this by exiting the \emph{for}-loop as soon as such an element is found. By contrast, the bit vector implementation always covers the entire domain. As the overlap percentage for CS-SAT increases, so does the probablity of encountering such an element early on in the \emph{for}-loop. 
%Figure~\ref{graph:commonitemsat} we see that the density of overlapping items decreases the grounding time of \textit{naive} increases slightly, which is to be expected since on average the naive algorithm will have to enumerate longer.
For CompleteSets-UNSAT we would expect to see the same phenomenon also be visible, but in Figures~\ref{graph:completesetsunsat01} and~\ref{graph:completesetsunsat0001} the difference between \textit{naive} and \textit{vec} is almost non-existent.

A known weakness of bit vectors it that they are bad at representing sparse data.  Figure~\ref{graph:triangle_graph}  shows this weakness is present in the TG benchmarks, where the bit vectors need to reserve a bit position for every one of the $n^3$ \emph{potential} triangles in the graph, even though the number of actual triangles is far lower.  Here, \emph{SLI no reduc} is even worse, since it actually produces a ground formula for each of the $n^3$ possibilities, which both \emph{SLI vec} and \emph{SLI naive} avoid.  Nevertheless, \emph{SLI vec} and \emph{SLI naive} still go through all of the $n^3$ possibilities. The SLI methods also hit the memory limit quite quickly, which we believe is partially due to an implementation detail that can be rectified in the future.  In any case, gringo does much better here, by virtue of its bottom-up approach that only considers the $O(n^2)$ actual edges in the graph.

In the literature, compressed bit vectors \cite{Lemire2018} are typically used to overcome this issue with sparse data.
In future work, we plan to investigate their usage for SLI as well.

\begin{figure}
\centering
\caption{Graph of grounding time for TriangleGraph benchmark}
\label{graph:triangle_graph}
\resizebox {0.4\textwidth}{!}{
\begin{tikzpicture}
    \begin{axis}
        [
            xlabel={amount of nodes},
            ylabel={time [s]},
            ymode=log,
            legend style={fill=none, draw=none},
            legend cell align={left},
            width={0.7\textwidth},
        ]
        \addplot table [x=size, y=totalground, col sep=comma]
            {csv_plots/TriangleGraph/SLI_vectorized.csv};
        \addplot table [x=size, y=totalground, col sep=comma]
            {csv_plots/TriangleGraph/SLI_naive.csv};
        \addplot table [x=size, y=totalground, col sep=comma]
            {csv_plots/TriangleGraph/SLI_no_ground.csv};
        \addplot table [x=size, y=totalground, col sep=comma]
            {csv_plots/TriangleGraph/ASP.csv};
        \addplot table [x=size, y=totalground, col sep=comma]
            {csv_plots/TriangleGraph/IDP-Z3.csv};
        % \addplot table [x=size, y=totalground, col sep=comma]
        %     {csv_plots/TriangleGraph/SLI_vectorized.csv};
        % \addplot table [x=size, y=totalground, col sep=comma]
        %     {csv_plots/TriangleGraph/SLI_naive.csv};
        % \addplot table [x=size, y=totalground, col sep=comma]
        %     {csv_plots/TriangleGraph/SLI_no_ground.csv};
        % \addplot table [x=size, y=totalground, col sep=comma]
        %     {csv_plots/TriangleGraph/ASP.csv};
        % \addplot table [x=size, y=totalground, col sep=comma]
        %     {csv_plots/TriangleGraph/IDP-Z3.csv};
        \legend{SLI vec, SLI naive, SLI no reduc, clingo, IDP-Z3}
    \end{axis}
\end{tikzpicture}
}
% \end{figure}

%
% \begin{figure}
\caption{Graph grounding time for CommonItem and CompleteSet benchmarks}
\label{fig:setgraphs}
\resizebox {0.4\textwidth}{!}{
\subfloat[\centering CompleteSets-SAT]{
\begin{tikzpicture}
    \label{graph:completesetssat}
    \begin{axis}
        [
            xlabel={size},
            ylabel={time [s]},
            legend pos={north west},
            legend style={fill=none, draw=none},
            legend cell align={left},
            width={0.7\textwidth},
        ]
        \addplot table [x=size, y=total, col sep=comma]
            {csv_plots/CompleteSets-Rising-SAT/SLI_vectorized.csv};
        \addplot table [x=size, y=total, col sep=comma]
            {csv_plots/CompleteSets-Rising-SAT/SLI_naive.csv};
        \addplot table [x=size, y=total, col sep=comma]
            {csv_plots/CompleteSets-Rising-SAT/SLI_no_ground.csv};
        \addplot table [x=size, y=total, col sep=comma]
            {csv_plots/CompleteSets-Rising-SAT/ASP.csv};
        % \addplot table [x=size, y=total, col sep=comma]
        %     {csv_plots/CommonItem-Rising-UNSAT/IDP-Z3.csv};
        \legend{SLI vec, SLI naive, SLI no reduc, clingo, IDP-Z3}
    \end{axis}
\end{tikzpicture}
}}
% \qquad
\resizebox {0.4\textwidth}{!}{
\subfloat[\centering CommonItem-UNSAT]{
\begin{tikzpicture}
    \label{graph:commonitemunsat}
    \begin{axis}
        [
            xlabel={size},
            ylabel={time [s]},
            legend pos={north west},
            legend style={fill=none, draw=none},
            legend cell align={left},
            width={0.7\textwidth},
        ]
        \addplot table [x=size, y=total, col sep=comma]
            {csv_plots/CommonItem-Rising-UNSAT/SLI_vectorized.csv};
        \addplot table [x=size, y=total, col sep=comma]
            {csv_plots/CommonItem-Rising-UNSAT/SLI_naive.csv};
        \addplot table [x=size, y=total, col sep=comma]
            {csv_plots/CommonItem-Rising-UNSAT/SLI_no_ground.csv};
        \addplot table [x=size, y=total, col sep=comma]
            {csv_plots/CommonItem-Rising-UNSAT/ASP.csv};
        \legend{SLI vec, SLI naive, SLI no reduc, clingo, IDP-Z3}
    \end{axis}
\end{tikzpicture}
}}

\resizebox {0.4\textwidth}{!}{
\subfloat[\centering CompleteSets-UNSAT 0.1 non-coverage]{
\begin{tikzpicture}
    \label{graph:completesetsunsat01}
    \begin{axis}
        [
            xlabel={size},
            ylabel={time [s]},
            legend pos={north west},
            legend style={fill=none, draw=none},
            legend cell align={left},
            width={0.7\textwidth},
        ]
        \addplot table [x=size, y=totalground, col sep=comma]
            {csv_plots/CompleteSets-01-UNSAT/SLI_vectorized.csv};
        \addplot table [x=size, y=totalground, col sep=comma]
            {csv_plots/CompleteSets-01-UNSAT/SLI_naive.csv};
        \addplot[color=black,mark=star] table [x=size, y=totalground, col sep=comma]
            {csv_plots/CompleteSets-01-UNSAT/ASP.csv};
        % \addplot table [x=size, y=totalground, col sep=comma]
        %     {csv_plots/CompleteSets-01-UNSAT/IDP-Z3.csv};
        \legend{
            SLI vec, SLI  naive, clingo
        }
    \end{axis}
\end{tikzpicture}
}}
\resizebox {0.4\textwidth}{!}{
\subfloat[\centering CompleteSets-UNSAT 0.001 non-coverage]{
\begin{tikzpicture}
    \label{graph:completesetsunsat0001}
    \begin{axis}
        [
            xlabel={size},
            ylabel={time [s]},
            legend pos={north west},
            legend style={fill=none, draw=none},
            legend cell align={left},
            width={0.7\textwidth},
        ]
        \addplot table [x=size, y=totalground, col sep=comma]
            {csv_plots/CompleteSets-0001-UNSAT/SLI_vectorized.csv};
        \addplot table [x=size, y=totalground, col sep=comma]
            {csv_plots/CompleteSets-0001-UNSAT/SLI_naive.csv};
        \addplot[color=black,mark=star] table [x=size, y=totalground, col sep=comma]
            {csv_plots/CompleteSets-0001-UNSAT/ASP.csv};
        % \addplot table [x=size, y=totalground, col sep=comma]
        %     {csv_plots/CompleteSets-0001-UNSAT/IDP-Z3.csv};
        \legend{
            SLI vec, SLI  naive, clingo
        }
    \end{axis}
\end{tikzpicture}
}}
\resizebox {0.4\textwidth}{!}{
\subfloat[\centering CommonItem-SAT 0.1 overlap]{
\begin{tikzpicture}
    \label{graph:commonitemsat01}
    \begin{axis}
        [
            xlabel={size},
            ylabel={time [s]},
            legend pos={north west},
            legend style={fill=none, draw=none},
            legend cell align={left},
            width={0.7\textwidth},
        ]
        \addplot table [x=size, y=total, col sep=comma]
            {csv_plots/CommonItem-01-SAT/SLI_vectorized.csv};
        \addplot table [x=size, y=total, col sep=comma]
            {csv_plots/CommonItem-01-SAT/SLI_naive.csv};
        \addplot[color=black,mark=star] table [x=size, y=total, col sep=comma]
            {csv_plots/CommonItem-01-SAT/ASP.csv};
        % \addplot table [x=size, y=total, col sep=comma]
        %     {csv_plots/CommonItem-01-SAT/IDP-Z3.csv};

        \legend{
            SLI vec, SLI  naive, clingo
        }
    \end{axis}
\end{tikzpicture}
}}
\resizebox {0.4\textwidth}{!}{
\subfloat[\centering CommonItem-SAT 0.001 overlap]{
\begin{tikzpicture}
    \label{graph:commonitemsat0001}
    \begin{axis}
        [
            xlabel={size},
            ylabel={time [s]},
            legend pos={north west},
            legend style={fill=none, draw=none},
            legend cell align={left},
            width={0.7\textwidth},
        ]
        \addplot table [x=size, y=totalground, col sep=comma]
            {csv_plots/CommonItem-0001-SAT/SLI_vectorized.csv};
        \addplot table [x=size, y=totalground, col sep=comma]
            {csv_plots/CommonItem-0001-SAT/SLI_naive.csv};
        \addplot[color=black,mark=star] table [x=size, y=totalground, col sep=comma]
            {csv_plots/CommonItem-0001-SAT/ASP.csv};

        \legend{
            SLI vec, SLI naive, clingo
        }
    \end{axis}
\end{tikzpicture}
}}
\end{figure}

\section{Conclusion}\label{s:conclusion}

In this paper, we have described a theoretical method to ground and simplify FO formulas using sets, and have demonstrated a concrete implementation using bit vectors.
% This method is useful for leveraging efficient set data structure for simplifying FO formulas.
Compared to grounding using an enumeration algorithm, this method allows us to delay grounding variables until after a formula has been simplified.
Our experiments show that this approach is competitive w.r.t. state-of-the-art approaches, in part thanks to SIMD parallelism of modern processors.
This confirms that using satisfying sets for simplification could increase a reasoning engine's performance when many symbols are already interpreted.
% We have experimented with bit vectors as set data structure and have found that this works well for sets that are not extremely sparse \lucas{beter verwoorden}.
% And have shown that using satisfying sets for simplification shows promise for increasing a reasoning engines performance for problems with many interpreted symbols.

% \subsection{Future work}

%bit vectors are only optimal if the cardinality of the set is relatively large compared to the universe size.
%Similarly, when many consecutive values are present, bit vectors also become suboptimal.

%Here, compressed bitvectors might produce faster results in most if not all cases. %instead of using standard bit vectors,  using a compressed bitmaps instead will produce faster results in most if not all cases.
% \lucas{Say somewhere that bit vectors, bitmaps and bitsets are the same}

%As part of our future work, we would like to experiment with grounding using different data structures, such as compressed bit maps.
%To better verify these grounding approaches, we are also planning on making a benchmark dedicated to grounding.
% Our current approach for aligning satisfying sets might not be optimal for formulas that have some overlapping variables.
% Our current approach for aligning satisfying sets is not optimal for conjunctions between formulas that have different but overlapping variables.
% In this case a join algorithm will likely perform much better.

\section*{Acknowledgements}

This research received funding from the Flemish Government under the “Onderzoeksprogramma Artifici\"ele Intelligentie (AI) Vlaanderen” programme and by Flanders Innovation \& Entrepreneurship (Tetra-project HBC.2022.0071).
The authors would also like to thank Bart Bogaerts and Jo Devriendt.
% FAIR2?

\bibliographystyle{splncs04}
\bibliography{biblio}

\end{document}